\newif \ifLong 
\newif \ifisit
\newcolumntype{C}[1]{>{\centering\let\newline\\\arraybackslash\hspace{0pt}}m{#1}}
\newtheorem{theorem}{Theorem}
\newtheorem{proposition}{Proposition}
\theoremstyle{definition}
\newtheorem{definition}{Definition}
\theoremstyle{definition}
\newtheorem{remark}{Remark}
\theoremstyle{definition}
\newcommand{\interior}[1]{%
  {\kern0pt#1}^{\mathrm{o}}%
}
\newcommand{\cry}{\operatorname{Hash Crypt}} 
\newcommand{\veri}{\operatorname{Hash Verify}}
\newcommand{\rev}[1]{\textcolor{black}{#1}}
\definecolor{DarkGreen}{rgb}{0.1,0.5,0.1}
\definecolor{DarkRed}{rgb}{0.5,0.1,0.1}
\definecolor{DarkBlue}{rgb}{0.1,0.1,0.5}
\definecolor{DarkPurple}{rgb}{0.5,0.2,0.5}
\definecolor{DarkTurquoise}{rgb}{0.1,0.5,0.5}
\definecolor{beaublue}{rgb}{0.74, 0.83, 0.9}
\definecolor{coolblack}{rgb}{0.0, 0.18, 0.39}
\definecolor{apricot}{rgb}{0.98, 0.81, 0.69}
\definecolor{burntorange}{rgb}{0.8, 0.33, 0.0}
\definecolor{blue-violet}{rgb}{0.54, 0.17, 0.89}
\definecolor{byzantium}{rgb}{0.44, 0.16, 0.39}
\definecolor{brilliantrose}{rgb}{1.0, 0.33, 0.64}
\definecolor{cerisepink}{rgb}{0.93, 0.23, 0.51}
\definecolor{cobalt}{rgb}{0.0, 0.28, 0.67}
\definecolor{bostonuniversityred}{rgb}{0.8, 0.0, 0.0}
\definecolor{ao(english)}{rgb}{0.0, 0.5, 0.0}
\newcommand{\off}[1]{}
\begin{document}

\title{Error Correction Capabilities of Non-Linear Cryptographic Hash Functions\vspace{-0.4cm}}



\author{Alejandro Cohen\IEEEauthorrefmark{1} and Rafael G. L. D'Oliveira\IEEEauthorrefmark{2}\\
\IEEEauthorrefmark{1}Faculty of Electrical and Computer Engineering, Technion, Israel, Email: alecohen@technion.ac.il\\
\IEEEauthorrefmark{2}SMSS, Clemson University, USA, Email: rdolive@clemson.edu
\vspace{-0.4cm}}


\maketitle


\begin{abstract}
Linear hashes are known to possess error-correcting capabilities. However, in most applications, non-linear hashes with pseudorandom outputs are utilized instead. It has also been established that classical non-systematic random codes, both linear and non-linear, are capacity achieving in the asymptotic regime. Thus, it is reasonable to expect that non-linear hashes might also exhibit good error-correcting capabilities.

In this paper, we show this to be the case. Our proof is based on techniques from multiple access channels. As a consequence, we show that Systematic Random Non-Linear Codes (S-RNLC) are capacity achieving in the asymptotic regime. We validate our results by comparing the performance of the Secure Hash Algorithm (SHA) with that of Systematic Random Linear Codes (S-RLC) and S-RNLC, demonstrating that SHA performs equally.

\end{abstract}



\section{Introduction}


Non-linear Cryptographic Hash Functions (NL-CHF), such as the Secure Hash Algorithms (SHA) \cite{burrows1995secure,eastlake2001us,penard2008secure}, have become a prevalent tool in a wide range of applications such as digital signatures, password protection, SSL handshakes, and data integrity checks \cite{preneel1993analysis,preneel1994cryptographic,rogaway2004cryptographic,sobti2012cryptographic}. One of the critical properties of these hash functions is collision resistance, which requires it to be computationally infeasible to find two distinct messages that produce the same hash value. This property plays a crucial role in ensuring the authenticity of data. For example, to validate a downloaded document, the user can calculate its message digest using the hash function and compare it with the originally stored message digest. If the two match, the document can be considered authentic.

However, as illustrated in Figure~\ref{fig:Sys}, many practical systems often involve the transmission of raw input messages \rev{ and the stored message digest} over noisy channels, which requires error correction. Similarly, in some cases, malicious actors can also partially manipulate downloaded messages \cite{sarwate2010coding,zhang2010p,dey2019sufficiently}.

The relationship between linear hashes and error-correction codes has been well established in the literature, as exemplified in works such as \cite{781407,harari1997hcc,ryabko2020linear,doi:10.1142/S1793830923500702}. To the best of our knowledge, this connection has only been studied for linear schemes. Thus, since the popular CHF, like SHA, are not linear,  the current prevalent systems in practice consist of two stages at the transmission side: first applying a non-linear hash function followed by a linear error correction coding scheme.

\begin{figure}[!t]
    \centering
    \includegraphics[trim={0cm 0cm 0cm 0cm},clip, width = 1\columnwidth]{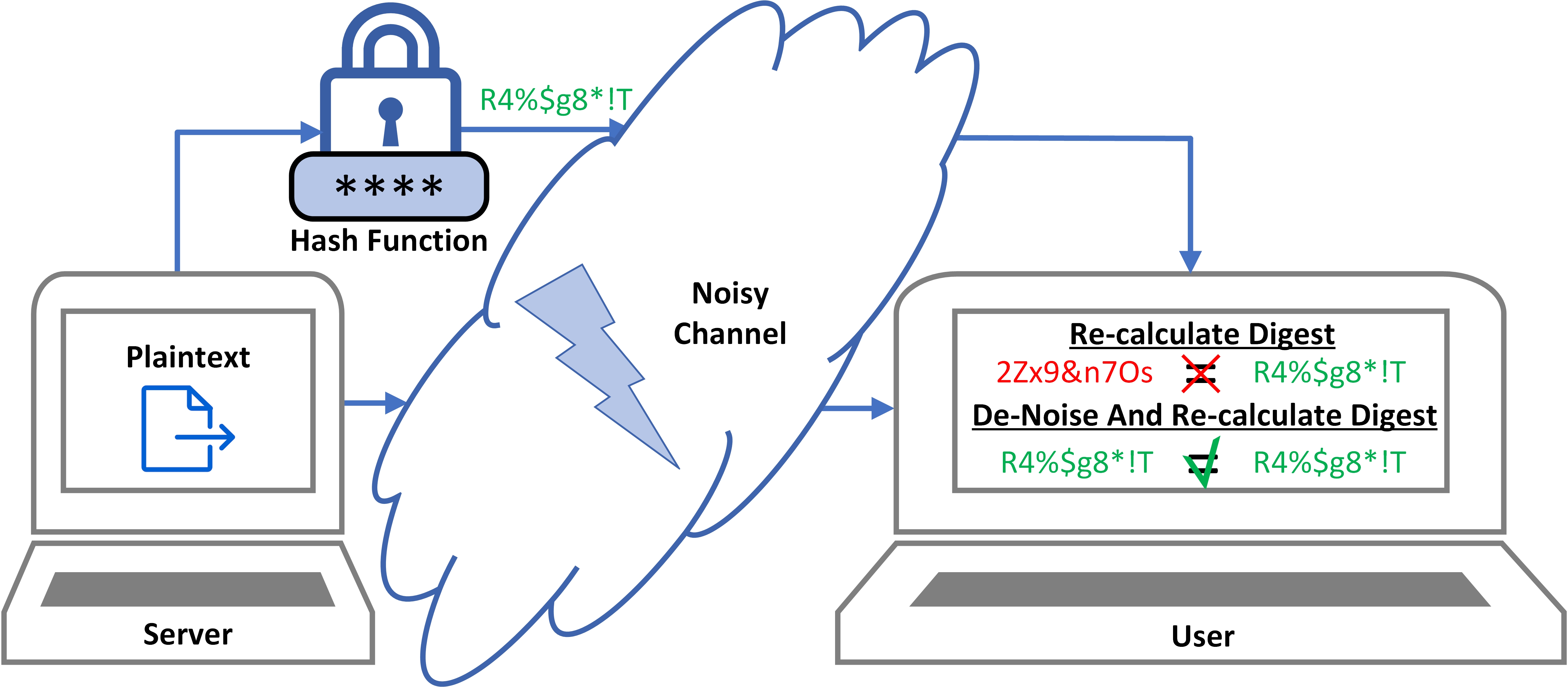}
    \caption{Reliable digital signature verification over a noisy channel, one source, Alice, one legitimate destination, Bob.}
    \label{fig:Sys}
    \vspace{-0.5cm}
\end{figure} 

In this paper, we show that commonly-used NL-CHF, such as SHA, possess forward error correction capabilities. Thus, when a NL-CHF is applied over a noisy communication channel, the authentication by the hash algorithm can be utilized not only for error detection, but also for error correction. This is not unexpected, since nonlinear random codes are capacity achieving in the asymptotic regime. In Theorem \ref{tho:deterministic}, we utilize techniques from multiple access channels \cite{cover2012elements,el2011network} to show that NL-CHF achieves capacity in the asymptotic regime. As a consequence, Systematic Random Non-Linear Codes (S-RNLC) also achieve capacity in the asymptotic regime.

Our scheme serves to expand the available options for system designers that utilize hash algorithms in their selection for error correction techniques. Instead of two stages, as previously discussed, for example, the transmitter can proceed with one single stage using only NL-CHF$^{\ref{footnote:keys}}\hspace{-0.05cm}$ schemes while achieving \emph{reliable authenticity validation} at the user at high data rates. \rev{The only change in the scheme proposed is on the decoding part (e.g., at the user in Figure~\ref{fig:Sys}). All the rest, including the transmitter and NL-CHF operations, are done identically as in the common existing solutions and schemes in the literature.} Based on the techniques presented in \cite{cohen2022partial}, we provide a new \rev{practical} joint error correction and hash check scheme for the user that combines error correction using an efficient Guessing Random Additive Noise Decoding (GRAND) \cite{duffy19GRAND}, with a hash function algorithm for authentication in an intermediate stage of the GRAND decoding algorithm, as illustrated in Figure~\ref{fig:scheme}.\footnote{In \cite{cohen2022aes}, an analogous decoder was utilized to show that cryptosystems with pseudorandom outputs like the Advanced Encryption Standard \cite{pub2001197,nechvatal2001report}, which offer secure data transmission, also possess error correcting capabilities.}

\begin{figure*}[!t]
    \centering
    \includegraphics[trim={0cm 0cm 0cm 0cm},clip, width = 1\textwidth]{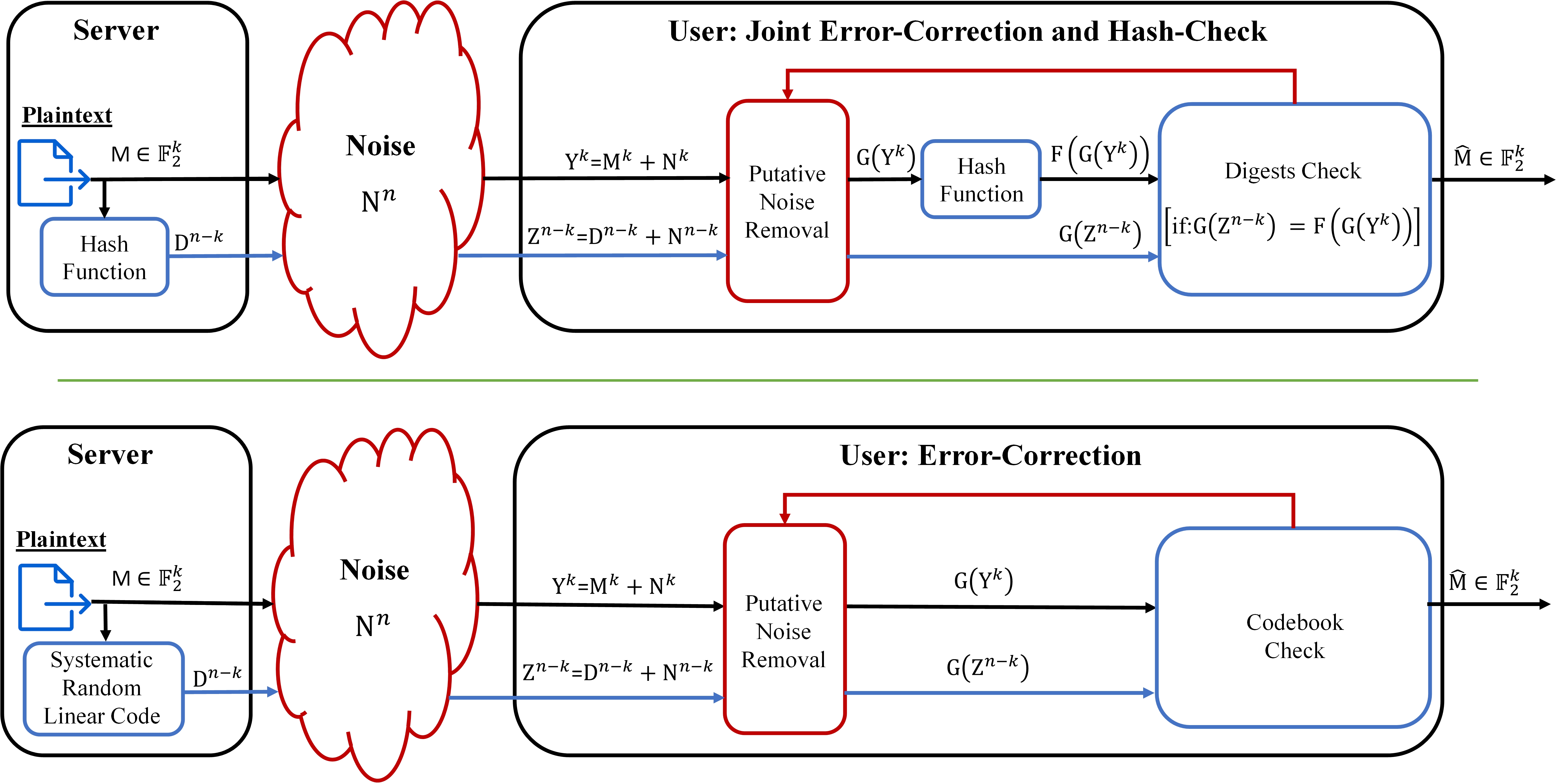}
    \caption{Proposed non-linear cryptographic hash functions with error correction capabilities, e.g., SHA (upper figure) and a Systematic RLC decoded with GRAND (bottom figure).}
    \label{fig:scheme}
   \vspace{-0.5cm}
\end{figure*}

Finally, we validate our results by comparing the performance of SHA1 \cite{national1993secure,pub2012secure}, utilizing the effective joint error correction and hash check scheme we propose for messages transmitted over the Additive white Gaussian noise (AWGN) channel, with that of Systematic Random Linear Codes (S-RLC), which are known to be capacity achieving and Systematic Random Non-Linear Codes (S-RNLC). Our results show that, in practical scenarios, the performance of SHA achieves the same error correction performance as S-RLC and S-RNLC.

The organization of this work is as follows. In Section~\ref{sec:system}, we describe the system model and the cryptographic hash functions and provide definitions used in this work. The proposed cryptographic hash function as an error-correcting coding scheme and the joint error correction and hash check decoder are presented in Section~\ref{sec:cry_ec}. Section~\ref{sec:main} demonstrates the error-correction performance of the proposed scheme using SHA1. In Section~\ref{sec:Perror}, we provide the probability of error analysis for the proposed NL-CHF as an error-correcting coding scheme. Finally, we conclude this work in Section~\ref{sec:conc}.

\section{Preliminaries}\label{sec:system}

\subsection{Setting}
Cryptographic hash functions, particularly secure hash algorithms, are widely used today for efficient digital signature verification, password hashing, SSL handshake, integrity checks, etc. To focus on the main contribution of the work, we consider the setting of the verification of digital signatures, as illustrated in Figures~\ref{fig:Sys} and~\ref{fig:scheme}. However, the approach proposed herein can be used for all the above problems in which NL-CHF is used.\footnote{\rev{We do note that digital signature schemes, using hashing functions, typically depend on keys available to legitimate parties in the hashing stage, to avoid sign messages at third party. To simplify the presentation and focus on the main contributions, in this paper, we assume that if keys are used in the cryptographic hash function, $F(\cdot)$, they are available at the legitimate parties. \label{footnote:keys}}}    

In the verification of digital signatures, we have a legitimate user, Bob, \rev{who} wishes to verify the authenticity of a downloaded document/file $M \in \mathbb{F}_{2}^{k}$ from a legitimate server, Alice. In this setting, Alice, in a preliminary stage, with an irreversible hash function $F(\cdot)$, produces and stores a fixed length message digest $D \in \mathbb{F}_{2}^{n-k}$ for each possible $M\in \mathbb{F}_{2}^{k}$. Then, when Bob requests a file from Alice, she sends both the original file requested and the appropriate message digest. \rev{We assume that Alice transmits the original file and the message digest to Bob over a noisy channel\footnote{\rev{To simplify the proof analysis,  we focus on the case where the message digest $D$ is also transmitted over a noisy channel. However, the proposed solution also works where $D$ is transmitted over a noiseless channel.}}. For the noisy communication at Bob, we consider an additive white Gaussian noise (AWGN) channel with noise~$N^n= (\bar{N}^{k},\tilde{N}^{n-k})$. The noise $N^n\sim\mathcal{N}\left(0,\sigma^2\right)$} is independent and identically distributed and is drawn from a normal distribution of zero mean with variance $\sigma^2$ \cite{cover2012elements}. \rev{We denote by $Y^k$ and $Z^{n-k}$ the demodulated noisy observation at Bob of the original file and the message digest, respectively, such that $Y^k=M^k+N^k$ and  $Z^{n-k}=D^{n-k}+N^{n-k}$.} 

\rev{The goal in this work is to design a decoding algorithm $\hat{M}(Y^k,Z^{n-k})$, such that observing the demodulated outcome of the noisy channel $Y^k$ and $Z^{n-k}$, Bob's decoder can identify the authentic legitimate document/file transmitted by Alice with high probability over $M$. That is, such that
\[
    \lim_{n \rightarrow \infty} P(\hat{M}(Y^k,Z^{n-k}) \neq M) \rightarrow 0.  
\]}

\subsection{Cryptographic Hash Functions}
A hash function is an algorithm that transforms an input value into a fixed-length output value, often referred to as a \say{hash} or a \say{digest}. This output value serves as a representation of the input. CHFs often require additional properties like preimage-resistance, second-preimage resistance, and collision-resistance. These properties can be defined in various ways (see \cite{rogaway2004cryptographic} for a complete discussion), but are not particularly relevant to our analysis. We are interested in CHFs that have pseudorandom outputs. Thus, we assume that the CHFs are a random oracle, a common assumption in the cryptographic literature \cite{bellare1993random,goldwasser1996lecture}.

\begin{definition} [Random oracle]
    A random oracle is an algorithm which “compresses” inputs value to a fixed-length output value as follows. For any input value, it checks if it has already generated a random fixed-length output. If so, it gives those random bits as output. If not, it creates random bits of the same fixed length, associates them with this input, and gives the random bits created as output.
\end{definition}


\begin{definition}[Cryptographic hash function]\label{Crypto_scheme}
A cryptographic hash function is made up of two algorithms:

\begin{itemize}
        \item A hashing algorithm $\cry(M)$, which acts as a random oracle\footnote{\rev{In practice, it is tested if one cannot distinguish the hash function from a real random oracle.}}, taking a plaintext message $M \in \mathbb{F}^{k}_{q}$ as input and generating a random digest signature $D \in \mathbb{F}^{n-k}_q$.
        \item A hash verification algorithm $\veri(M,D)$ that takes both the message $M$ and the digest $D$ as inputs, and outputs a Boolean value indicating whether the given digest $D$ is a valid signature for the plaintext message $M$.
    \end{itemize}
 That is, a cryptographic hash function requirement is that, 
\[
    \veri(M,\cry(M)) = \text{True}.
\]
\end{definition}

In the considered communication model, we deal exclusively with binary data and thus impose the restriction that the inputs and outputs of our hashing algorithm are binary vectors. Thus, the hash functions that we consider take the form $\cry : \mathbb{F}_2^k \rightarrow \mathbb{F}_2^{n-k}$.

\subsection{Random Functions and Random Codes} 
The connection between random functions and random codes, together with the noisy-channel coding theorem \cite{shannon1948mathematical} considered in this subsection, is the key to our main result given in Section~\ref{sec:cry_ec}.

\begin{definition}[Systematic Random Function] \label{def: random function}
A systematic random function is a function $f: \mathbb{F}_2^k \rightarrow \mathbb{F}_2^k \times \mathbb{F}_2^{n-k}$ chosen uniformly at random from the set of all functions from $\mathbb{F}_2^k$ to $\mathbb{F}_2^k \times \mathbb{F}_2^{n-k}$ that maintain the first $k$ bits unchanged.
\end{definition}
It is important to note that a pseudorandom function, as usually utilized in classical NL-CHF, is computationally indistinguishable from a systematic random function.

\begin{definition}[Random Code] \label{def: random code}
Let $k,n \in \mathbb{Z}_+$ be positive integers. A random code with rate $k/n$ is a subset $\mathcal{C} \subseteq \mathbb{F}_2^n$ chosen uniformly at random from the set of all subsets of $\mathbb{F}_2^n$ with cardinality less than or equal to $2^k$.
\end{definition}

In the asymptotic regime, i.e., when $n$ goes to infinity, random codes are capacity-achieving for the AWGN channels. In particular, this implies that as $n$ goes to infinity, the error probability of a code chosen uniformly at random with fixed rate $R=\frac{k}{n}<C$ goes to zero, \rev{where $C$ denotes} the capacity of the underlying AWGN channel. 

\begin{proposition}[Random Codes are Good] \label{prop: random codes good}
Let $C$ be the capacity of the underlying AWGN channel, and $\mathcal{C}$ be a random code with rate $R=\frac{k}{n}<C$. Then, as $n\rightarrow \infty$, the error probability $P_e(\mathcal{C})\rightarrow 0$, with high probability.
\end{proposition}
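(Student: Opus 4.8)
The plan is to prove this via Shannon's classical random-coding argument, treating the binary code $\mathcal{C}$ as the input to a BPSK-modulated AWGN channel, and then to upgrade the usual ensemble-average guarantee into the claimed high-probability statement using Markov's inequality. First I would identify the input distribution that $\mathcal{C}$ induces on the channel: since $\mathcal{C}$ is drawn uniformly at random, each coordinate of each codeword is (to leading order) an independent uniform bit, which under modulation corresponds to an i.i.d. input distribution $p(x)$ on the signal points. Let $I(X;Y)$ denote the mutual information of the resulting memoryless channel under this input, and recall $C=\max_{p(x)} I(X;Y)$; the hypothesis $R<C$, together with the fact that the symmetric input we use attains the relevant information rate, is what the argument will exploit.

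Next I would set up joint-typicality (threshold) decoding: given the received vector, the decoder outputs message $m$ if its codeword is the unique one jointly typical with the channel output, and errs otherwise. The error analysis then splits into two events: (i) the transmitted codeword fails to be jointly typical with the output, and (ii) some other codeword is spuriously jointly typical with the output. Event (i) has vanishing probability by the joint AEP / law of large numbers. For event (ii), since the competing codewords are independent of the output, the probability that any fixed wrong codeword is jointly typical is at most $2^{-n(I(X;Y)-\varepsilon)}$; a union bound over the at most $2^{k}-1$ competing codewords bounds its probability by $2^{k}\,2^{-n(I(X;Y)-\varepsilon)}=2^{-n(I(X;Y)-R-\varepsilon)}$, which tends to $0$ provided $R<I(X;Y)$. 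Combining the two bounds shows that the ensemble-averaged error probability $\bar P_e=\mathbb{E}_{\mathcal{C}}[P_e(\mathcal{C})]\to 0$.

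Finally I would convert this expectation bound into a high-probability statement over the draw of $\mathcal{C}$. Since $P_e(\mathcal{C})\ge 0$ and $\mathbb{E}_{\mathcal{C}}[P_e(\mathcal{C})]=\bar P_e\to 0$, Markov's inequality gives $P\big(P_e(\mathcal{C})>\sqrt{\bar P_e}\big)\le \sqrt{\bar P_e}\to 0$, so with probability approaching one the drawn code satisfies $P_e(\mathcal{C})\le\sqrt{\bar P_e}\to 0$, which is exactly the assertion \emph{with high probability}.

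The step I expect to be the main obstacle is the analysis of event (ii): controlling the probability that an incorrect codeword is jointly typical with the output and making the union bound compatible with the rate $R$, since this is precisely where the threshold $R<C$ enters and where the exact input distribution induced by $\mathcal{C}$ matters. A secondary technical point to handle with care is that Definition~\ref{def: random code} samples a subset of $\mathbb{F}_2^n$ (without replacement, and only up to cardinality $2^k$), so the codewords are not perfectly independent; I would argue that these dependencies are asymptotically negligible and do not affect the union bound.
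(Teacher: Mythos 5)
Your proposal is correct and is essentially the same approach as the paper's: the paper's proof of Proposition~\ref{prop: random codes good} is simply a citation to the classical random-coding achievability argument in \cite[Chapter 7]{cover2012elements}, which is exactly the joint-typicality decoding, union-bound, and ensemble-average analysis you spell out. Your two additions --- the Markov-inequality step upgrading $\mathbb{E}_{\mathcal{C}}[P_e(\mathcal{C})]\to 0$ to a with-high-probability statement, and the observation that sampling the code as a uniform subset (without replacement) only perturbs the pairwise codeword statistics negligibly --- are the standard details needed to match the proposition as stated, and both are handled correctly.
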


\begin{proof}
The proposition follows as a straightforward argument from the results in \cite[Chapter 7]{cover2012elements}.  The formal definitions for the error probabilities of a code we utilize are given in \cite[Chapter 7.5]{cover2012elements}.
\end{proof}

The key point we utilize for the proposed cryptographic hash functions as error correcting codes presented in the next section is that if $f:\mathbb{F}_2^k \rightarrow \mathbb{F}_2^k \times \mathbb{F}_2^{n-k}$ is a systematic random function, then $\mathcal{C} = f(\mathbb{F}_2^k) \subseteq \mathbb{F}_2^n$ is a systematic random code with rate $R = k/n$. Thus, as shown in Proposition \ref{prop: random codes good} for random codes, we show in Section~\ref{sec:Perror} that the systematic random code $\mathcal{C} = f(\mathbb{F}_2^k)$ has a high probability of being a good code \rev{as defined in Proposition~\ref{prop: random codes good}}.


\section{Cryptographic Hash Functions as Error Correcting Codes}\label{sec:cry_ec}
In this section, we show our joint hash check and error correction scheme utilizing traditional non-linear cryptographic hash functions with a uniform outcome. Recall that Alice needs to transmit over a noisy channel the requested file $M\in \mathbb{F}^{k}_{2}$ to Bob and a message digest $D\in \mathbb{F}^{n-k}_{2}$ for the signature verification. In Figure~\ref{fig:scheme}, we illustrate the encoding and decoding operations for the proposed non-linear hash function$^{\ref{footnote:keys}}$ with error correction capabilities and for the traditional error correction scheme using a systematic RLC code. 

We start by presenting the encoding process at Alice. In this setting, we assume that the messages are uniformly distributed with probability $p(m)$. For each possible message $m\in \mathbb{F}^k_2$, we assume Alice creates a uniformly distributed digest $d \in \mathbb{F}^{n-k}_2$ with probability $p(d)$ generated \off{(i.e., every message has its own digest) }by the cryptographic hash function $F(\cdot)$, given by      
\begin{equation*}
    \cry: F\left(M \in \mathbb{F}_2^k\right)  \rightarrow D \in \mathbb{F}_2^{n-k}.
\end{equation*}
Then, the message sequence $m_i$ and the digest sequence $d_j$ are transmitted over the noisy channel. That is, the proposed scheme is a non-linear systematic random code scheme where the multiplication gives the outcome of the first sequence of $k$ symbols with an identity matrix and the second sequence of $n-k$ symbols by the hash function.    

At Bob, the joint error-correction and hash \rev{check} scheme is given by
\begin{equation*}
\textstyle   \text{Joint-Dec-HashVerify}: Y \in \mathbb{F}_{2}^{k} \times Z \in \mathbb{F}_{2}^{n-k} \rightarrow \hat{M}\in\mathbb{F}_{2}^{k},
\end{equation*}
which maps the outcome noisy channel $Y^k$ and $Z^{n-k}$ to estimated authentic message transmitted $\hat{M}^{k}$.

\rev{To decode the messages, Bob utilizes a practical modified version of the GRAND decoder \cite{duffy19GRAND} as illustrated in Algorithm~\ref{alg:JD-D}.  Given the demodulated noisy channel outcome $Y^k$ and $Z^{n-k}$, Bob orders the putative noise effect sequences, $W^n= (\bar{W}^{k},\tilde{W}^{n-k}) \in \mathbb{F}_{2}^{n}$, from most likely to least likely (resp. to line~\ref{alg:JD-D_noise} in Algo.~\ref{alg:JD-D}). He then goes through the list, subtracting the putative noise effect $W^n$ from $(Y^k, Z^{n-k})$. Let $G(\cdot,W)$ denote the de-noising function, which subtracts the noise. Then, using the  verification cryptographic hash function, 
\[
\veri(G(Y^k,\bar{W}^k),G(Z^{n-k},\tilde{W}^{n-k})),
\]
Bob recalculates the hash digests from the de-noised sequence $Y^k$ obtained and compares it to the de-noised sequence $Z^{n-k}$ to establish data integrity (resp. to lines~\ref{alg:JD-D_dcry}-\ref{alg:JD-D_pading} in Algo.~\ref{alg:JD-D}). Hence, the file obtained is declared as the original data file requested if and only if
\begin{multline*}
   \hspace{-0.4cm}\text{for } \hat{D}^{n-k} \triangleq F(G(Y^k,\bar{W}^k))=F(G(M^k+\bar{N}^{k},\bar{W}^k)),\\
   \hspace{-0.0cm} \text{and }  \ddot{D}^{n-k} \triangleq G(Z^{n-k},\tilde{W}^{n-k}) = G(F(M^k)+\tilde{N}^{n-k},\tilde{W}^{n-k}),\\
   \off{\hspace{2.0cm} \text{and }   (M^k,D^{n-k},Y^{k},Z^{n-k})\in\mathcal{T}^{(n)}_{\epsilon},\\}
   \ddot{D}^{n-k} \text{ is equal to } \hat{D}^{n-k}. 
\end{multline*}
The first time this occurs, Bob declares $\hat{M} = G(Y^k,\bar{W}^k)$ as the transmitted message. As shown in \cite{duffy19GRAND}, this sequential de-noising proceed for the decoding procedure is a Maximum Likelihood (ML) decoder.}

\begin{algorithm}\small
\caption{\rev{Joint Error-Correction Decoding and Hash Check}}
\label{alg:JD-D}
 \vspace{-1.5mm}
\begin{flushleft}
Input:  hash function$^{\ref{footnote:keys}}$ $F(\cdot)$, noisy outcome channel $(Y^k,Z^{n-k})$ \newline
Output: $\hat{M}$, maximum likelihood decoding 
 \vspace{-1.5mm}
\end{flushleft}
\begin{algorithmic}[1]
\State $b \leftarrow 0$
\While{$b=0$}
\State $W^n \leftarrow$ next most likely noise effect\label{alg:JD-D_noise}\vspace{0.1cm}\newline
\vspace{0.1cm} $\overline{\underline{\hspace{0.5cm}\veri(G(Y^k,\bar{W}^k),G(Z^{n-k},\tilde{W}^{n-k}))^{\ref{footnote:keys}}\hspace{0.5cm}}}$
\State $\hat{D}^{n-k} \leftarrow F(G(Y^k,\bar{W}^k))$\label{alg:JD-D_dcry}
\State $\ddot{D}^{n-k} \leftarrow G(Z^{n-k},\tilde{W}^{n-k})$
\If{$\hat{D}^{n-k} == \ddot{D}^{n-k} $}
\State $\hat{M} \leftarrow G(Y^k,\bar{W}^k)$
\State $b \leftarrow 1$   
\State \Return $\hat{M}$
\EndIf\label{alg:JD-D_pading}
\EndWhile
\end{algorithmic}
\end{algorithm}

The main achievability result is given by the following theorem using the NL-CHF scheme and assuming that the hash function scheme satisfies Definition~\ref{def: random function} and~\ref{def: random code}.

\begin{theorem}\label{tho:deterministic}
Let $C$ be the capacity of the underlying AWGN channel, and suppose that $\cry$ is a random function. Then, if $\frac{k}{n}<C$, with high probability, the NL-CHF scheme can asymptotically transmit with an arbitrarily low probability of error at a rate of $\frac{k}{n}$.
\end{theorem}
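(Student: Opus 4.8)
The plan is to reduce the claim to a statement about systematic random codes and then run a random-coding error analysis in the style of the multiple access channel (MAC). Since $\cry$ is modeled as a random oracle, it is a systematic random function in the sense of Definition~\ref{def: random function}, and hence its image $\mathcal{C} = \{(m, F(m)) : m \in \mathbb{F}_2^k\}$ is a systematic random code of rate $R = k/n$. The decoder in Algorithm~\ref{alg:JD-D} walks the putative noise sequences from most to least likely and accepts the first de-noised word whose recomputed digest matches; as noted after the algorithm and in \cite{duffy19GRAND}, this is exactly maximum-likelihood decoding. It therefore suffices to upper bound the ML (equivalently, jointly-typical) decoding error probability of $\mathcal{C}$ over the AWGN channel and show that it vanishes as $n \to \infty$ whenever $k/n < C$.

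First I would split the average error probability, averaged over both the random function $F$ and the channel noise, into two contributions: the probability that the transmitted codeword $(m, F(m))$ is not jointly typical with the received pair $(Y^k, Z^{n-k})$, which tends to zero by the AEP and the law of large numbers, and the probability that some competing codeword $(\hat m, F(\hat m))$ with $\hat m \neq m$ is jointly typical with $(Y^k, Z^{n-k})$. The crux is the second term, and this is exactly where the systematic structure blocks a direct appeal to Proposition~\ref{prop: random codes good}: unlike a fully random code in the sense of Definition~\ref{def: random code}, here only the parity block $F(\hat m)$ is uniform and independent, while the systematic block $\hat m$ is deterministic.

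To handle this I would borrow the MAC error-event bookkeeping, in which the joint-typicality test factors across the two independent sub-channels. For a competing message $\hat m \neq m$, the systematic block $\hat m$ must be jointly typical with $Y^k = m + N^k$, and the parity block $F(\hat m)$ must be jointly typical with $Z^{n-k} = F(m) + N^{n-k}$. The number of messages whose systematic block survives the first test is, by the AEP, on the order of $2^{k(1-C)}$; and since $F(\hat m)$ is uniform and independent for $\hat m \neq m$, each surviving candidate passes the parity test with probability on the order of $2^{-(n-k)C}$, exactly as for a genuine random code on the $n-k$ parity coordinates. Taking the MAC-style union over competing messages then yields a bound on the order of $2^{k(1-C)} \cdot 2^{-(n-k)C} = 2^{\,k - nC}$, which vanishes precisely when $R = k/n < C$.

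The main obstacle, and the reason the MAC machinery is needed in place of the off-the-shelf Proposition~\ref{prop: random codes good}, is establishing this factored bound rigorously: one must verify, via large-deviation (Chernoff/typicality) estimates on the AWGN decoding metric, that the deterministic systematic coordinates contribute a candidate list of the claimed exponential size and that the genuinely random parity coordinates independently supply the elimination factor, with the cross terms spoiling neither exponent. Once the expected error probability over the random function is shown to vanish for every $k/n < C$, I would close with a standard concentration (expurgation) argument to conclude that with high probability a fixed realization---e.g. SHA modeled as a random oracle---is itself a good code in the sense of Proposition~\ref{prop: random codes good}, which is the assertion of the theorem.
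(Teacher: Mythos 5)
Your proposal is sound and reaches the right conclusion, but its bookkeeping differs genuinely from the paper's. The paper also analyzes a suboptimal joint-typicality decoder (noting, as you do, that this suffices because the GRAND decoder is ML --- though your parenthetical ``equivalently, jointly-typical'' should really be ``whose error upper-bounds that of ML''; the two decoders are not equivalent). But from there the paper imports the multiple-access-channel error-event decomposition of \cite[Ch.~15.3]{cover2012elements} nearly verbatim: conditioned on the transmitted pair $(m_1,d_1)$ it defines the events $\mathcal{E}^{c}_{11}$, $\mathcal{E}_{i1}$ (wrong message, correct digest), $\mathcal{E}_{1j}$ (correct message, wrong digest), and $\mathcal{E}_{ij}$ (both wrong), applies the union bound, and bounds each event by a mutual-information exponent, e.g.\ $Q(\mathcal{E}_{i1}) \leq 2^{-n(I(M;Y,Z|D)-3\epsilon)}$ and $Q(\mathcal{E}_{ij}) \leq 2^{-n(I(M,D;Y,Z)-4\epsilon)}$; the random-oracle assumption enters only through the identity $I(M;D,Y,Z)=I(M;Y,Z|D)$, valid because $D=F(M)$ is uniform and independent of $M$. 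Your route instead exploits the systematic block structure head-on: you factor the typicality test across the two independent sub-channels, count a candidate list of size roughly $2^{k(1-C)}$ on the systematic coordinates, and eliminate each survivor with probability roughly $2^{-(n-k)C}$ using the uniformity and independence of $F(\hat{m})$ for $\hat{m} \neq m$. The two decompositions yield the same condition, since the paper's threshold factors across the blocks as $I(M^k,D^{n-k};Y^k,Z^{n-k}) = I(M^k;Y^k) + I(D^{n-k};Z^{n-k}) = nC$, which is exactly your exponent $2^{k-nC}$. What the paper's version buys is a citation-ready transcription of a textbook MAC lemma; what yours buys is transparency about precisely the obstacle you identify --- that the systematic coordinates are deterministic and only the parity coordinates are random, so Proposition~\ref{prop: random codes good} cannot be invoked off the shelf --- plus two small improvements: you drop the event $\mathcal{E}_{1j}$, which is vacuous for a systematic code (no competing codeword pairs the correct message with a wrong digest), and you make explicit the Markov/expurgation step converting vanishing average error over $F$ into the ``with high probability'' claim, a step the paper leaves implicit.
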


\begin{proof}
In Section~\ref{sec:Perror}, the probability of error analysis for Theorem~\ref{tho:deterministic} is provided.
\end{proof}


\begin{remark}
It is important to note that the result presented in Theorem~\ref{tho:deterministic} and the accompanying analysis in Section~\ref{sec:Perror}, for the probability of error, hold implications for the field of forward error correction using non-linear systematic random codes. That is, those results show that S-RNLC schemes are capacity-achieving in the asymptotic regime.
\end{remark}

\section{SHA as an Non-Linear Error Correcting Code}\label{sec:main}
In this section, we demonstrate the error correction capabilities of non-linear cryptographic hash functions with uniform outcomes utilizing secure hash algorithms. The joint scheme is given in Section~\ref{sec:cry_ec} and is illustrated in Figure~\ref{fig:scheme}.\off{ The noise entropy we consider for the AWGN noisy channel is $h(N)$.} In particular, we show that in practical communication scenarios, traditional non-linear hash schemes, such as \rev{SHA1 and SHA-256}, have error correction capabilities that are similar to random codes. The simulation presented employs the AWGN channel model with \rev{Binary} Phase Shift Keying (BPSK) modulation. \rev{As a function of energy per information bit to noise power spectral density ratio, Eb/N0 \cite{enwiki:1121091972}, we use decoded Block Error Rate (BLER) as a performance metric.}

For the SHA1 cryptosystem, we use a standard FIPS scheme as given in \cite{national1993secure}. This is a common cryptographic hash function that belongs to the family of non-linear hash functions that can serve as error-correcting as defined in Section~\ref{sec:cry_ec}. For encoding codes, we use: (1)  Classical non-linear SHA1. (2) Systematic Random linear codes (S-RLC), which have known to be capacity achieving~\cite{gallager1973random}. (3) Systematic Random Non-Liniar Codes (S-RNLC). For the joint error correction and hash check decoding scheme, we use a \rev{practical} modified version of the GRAND decoder \cite{duffy19GRAND}, as proposed in Section~\ref{sec:cry_ec}. The \rev{software} implementations used for GRAND, SHA1\rev{, and SHA-256} are available at \cite{GRANDMATLAB} and \cite{SHA}, respectively.

\begin{figure}[!t]
\vspace{-0.2cm}
    \centering
    \includegraphics[trim=0.65cm 0cm 0cm 0cm,clip,width= 0.45\textwidth]{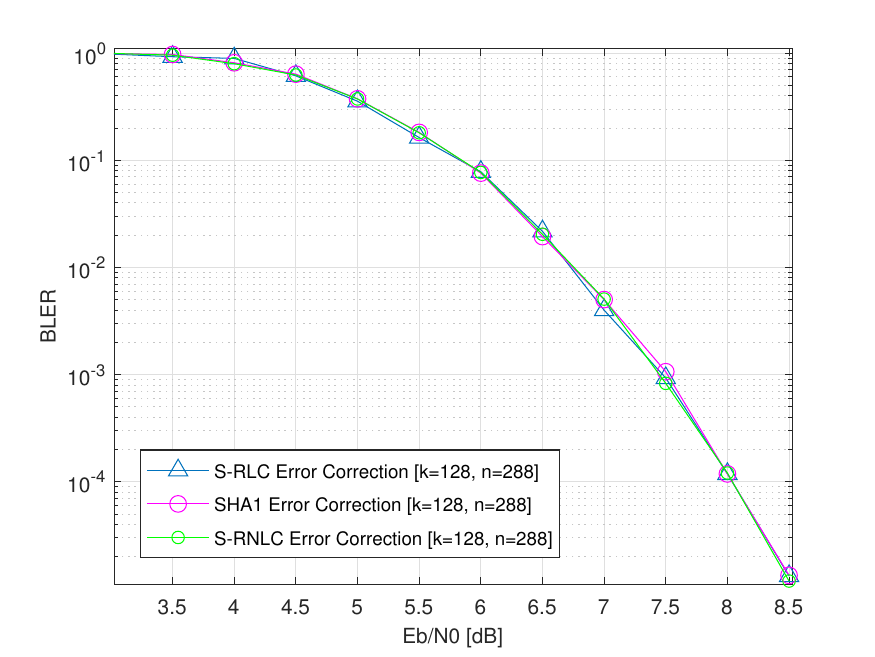}
    \caption{BLER vs. Eb/N0 for codes of $k=128$ and $n=288$ encoded with SHA1 as error correcting code and with Systematic RLC and Systematic RNLC. Here, both $M\in \mathbb{F}_{2}^{k}$ and $D\in \mathbb{F}_{2}^{n-k}$ are transmitted over the noisy channel. The joint error correction and hash check is performed with GRAND.}
    \label{fig:sim}
    \vspace{-0.5cm}
\end{figure}

\begin{figure}[!t]
\vspace{-0.2cm}
    \centering
    \includegraphics[trim=0.65cm 0cm 0cm 0cm,clip,width= 0.45\textwidth]{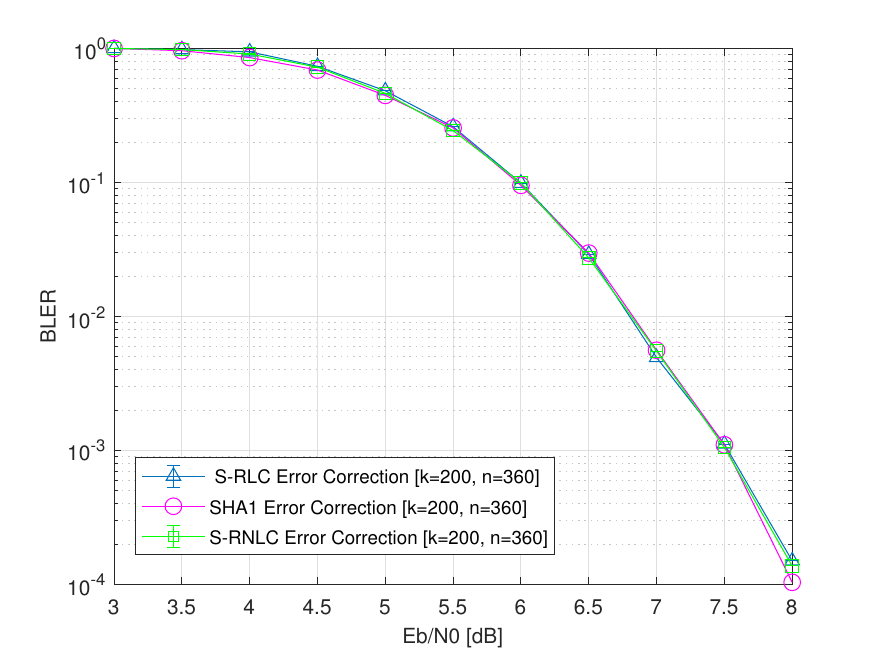}
    \caption{BLER vs. Eb/N0 for codes of $k=200$ and $n=350$ encoded with SHA1  as error correcting code and with Systematic RLC and Systematic RNLC. Here, both $M\in \mathbb{F}_{2}^{k}$ and $D\in \mathbb{F}_{2}^{n-k}$ are transmitted over the noisy channel. The joint error correction and hash check is performed with GRAND.}
    \label{fig:sim1}
    \vspace{-0.4cm}
\end{figure}

\off{\begin{figure}[!t]
    \centering
    \includegraphics[width= 0.49\textwidth]{SHA_256_and_RLC_k_128_n_160.eps}
    \caption{BLER vs. Eb/N0 for codes of length n = 160 and rate R = 0.8 encoded with SHA-256 as error correcting code and with Systematic RLC and Systematic RNLC. Here, as depicted in Figure~\ref{fig:scheme}, only $M\in \mathbb{F}_{2}^{k}$ is transmitted over the noisy channel. The joint decryption-decoding is performed with GRAND.}
    \label{fig:sim1}
    \vspace{-0.4cm}
\end{figure}
\begin{figure}[!t]
    \centering
    \includegraphics[width= 0.49\textwidth]{SHA1_and_RLC_k_128_n_148.eps}
    \caption{BLER vs. Eb/N0 for codes of length n = 148 and rate R = 0.86 encoded with SHA1 as error correcting code and with Systematic RLC. Here, as depicted in Figure~\ref{fig:scheme}, only $M\in \mathbb{F}_{2}^{k}$ is transmitted over the noisy channel. The joint decryption-decoding is performed with GRAND.}
    \label{fig:sim2}
    \vspace{-0.4cm}
\end{figure}}

In Figures~\ref{fig:sim}, we empirically contrast the error correction capabilities of the proposed approach using SHA1 with that of S-RLC and S-RNLC. We show the BLER vs. Eb/N0 for SHA1 and S-RLC codes with $k=128$ and length $n=288$. The blue lines show the performance of S-RLC codes, the purple lines of SHA1 hash algorithms as an error-correcting code, and the green lines show the performance of S-RNLC codes. The decoding is performed with GRAND. Compared to S-RLC  and S-RNLC codes, the schemes tested obtain almost the same performance in practical scenarios. In Figures~\ref{fig:sim1}, with rates $k=200$ and $n=360$, the same performance comparison is presented as in Figure~\ref{fig:sim}. We can notice that similar performances are obtained using all the tested coding schemes at different rates and lengths. Moreover, \rev{it is important to note that} forward error correction performance using SHA-256 was also tested, and the results obtained are, again, as with SHA1, similar to those using S-RNLC and S-RLC.

\section{Probability of Error Analysis (Theorem \ref{tho:deterministic})}\label{sec:Perror}
We now provide the probability of error analysis for Theorem~\ref{tho:deterministic} \rev{using typical decoder\off{\footnote{\rev{We do note that although typical decoding is suboptimal, compared to the proposed practical modified GRAND in Section~\ref{sec:cry_ec} (that is ML decoder), it is simple to analyze and still archives all the rates below capacity \cite{cover2012elements}.}}}}. \rev{Although typical decoding considered in the literature \cite{cover2012elements} and in this section is suboptimal, compared to the practical modified GRAND proposed in Section~\ref{sec:cry_ec} (that is ML decoder), it is simple to analyze and still archives all the rates below capacity with error probability $P_{e}^{n} \rightarrow 0$ as $n \rightarrow \infty$.} We assume that the NL-CHF scheme satisfies Definitions~\ref{def: random function} and~\ref{def: random code}. The error analysis follows the results and the techniques given in \off{\cite[Chapter 4.5]{el2011network} and} \cite[Chapter 15.3]{cover2012elements} for multiple access channels. We provide the adapted proof herein, adopting the scope, terminology, and notations to the NL-CHF as an error-correcting coding scheme proposed in this work.  

\rev{Let $\mathcal{T}^{(n)}_{\epsilon}$ denote the set of typical $(M^k,D^{n-k},Y^{k},Z^{n-k})$ sequences.} 

Without loss of generality, we assume that $(m_{i=1},d_{j=1})$ were requested by Bob and transmitted then by Alice. We have an error if either the correct $m_1$ and $d_1$ sequences are not typical with $Y^k$ and $Z^{n-k}$ sequences or for $i \neq 1$ and $j \neq 1$ incorrect sequences $m_i$ and $d_j$ there are typical with $Y^{k}$ and $Z^{n-k}$. Thus we define the error events by
\begin{align*}
& \mathcal{E}^{c}_{11} = \left \{ (m_1, d_1, y, z) \notin \mathcal{T}^{(n)}_{\epsilon} \right\},\\
& \mathcal{E}_{i1} = \left \{ (m_i, d_1, y, z) \in \mathcal{T}^{(n)}_{\epsilon}  \text{ for some } m_i \neq m_1\right\},\\
& \mathcal{E}_{1j} = \left \{ (m_1, d_j, y, z) \in \mathcal{T}^{(n)}_{\epsilon}  \text{ for some } d_j \neq d_1\right\},\\
& \mathcal{E}_{ij} = \left\{ (m_i, d_j, y, z) \in \mathcal{T}^{(n)}_{\epsilon} \text{ for some } m_i \neq m_1,d_j \neq d_1 \right\},
\end{align*}

Let $Q(\cdot)$ denote the conditional probability given that Alice transmitted the set of sequences $(m_1,d_1)$.
Thus, by the union bound of the error events, we have
\begin{multline}\label{eq:Perror}
    P_{e}^{n}\leq Q(\mathcal{E}^{c}_{11}) +  Q(\mathcal{E}_{i1}) + Q(\mathcal{E}_{1j})+ Q(\mathcal{E}_{ij}).
\end{multline}

We will now analyze the error events. From the asymptotic equipartition property \cite[Chapter 3]{cover2012elements}, we have $Q(\mathcal{E}^{c}_{11})\rightarrow 0$ for $n \rightarrow \infty$. For the sequences $(m_i,d_1)$ when $i \neq 1$ the joint pmf is given by $p(m)p(d)p(y,z|d)$, thus we have 
\begin{align*}
&\hspace{-0.3cm} Q(\mathcal{E}_{i1}) = Q\left((m_i, d_1, y, z) \in \mathcal{T}^{(n)}_{\epsilon}\right)\\
&\hspace{-0.3cm} = \sum_{(m, d) \in \mathcal{T}^{(n)}_{\epsilon}} p(m)p(d,y,z)\\
&\hspace{-0.3cm} \leq |\mathcal{T}^{(n)}_{\epsilon}|2^{-n(H(M)-\epsilon)}2^{-n(H(D,Y,Z)-\epsilon)}\\
&\hspace{-0.3cm} \leq \off{|\mathcal{T}^{(n)}_{\epsilon}|}2^{-n(H(M)+H(D,Y,Z)-H(M,D,Y,Z)-3\epsilon)}\\
&\hspace{-0.3cm} = 2^{-n(I(M;D,Y,Z)-3\epsilon)}\\
&\hspace{-0.3cm} = 2^{-n(I(M;Y,Z|D)-3\epsilon)},
\end{align*}
where the last two \rev{equalities} follow as we assume that the uniform random output $D^{n-k}$ of the hash function $F(\cdot)$ is statistically independent from message $M^k$. Such that we have, $I(M;D,Y,Z) = I(M;D) + I(M;Y,Z|D) = I(M;Y,Z|D)$.

For the sequences $(m_1,d_j)$ when $j \neq 1$ the joint pmf is given by $p(m)p(d)p(y,z|m)$, thus similarly, we have, 
\[
Q(\mathcal{E}_{1j}) \leq 2^{-n(I(D;Y,Z|M)-3\epsilon)},
\]
and for the sequences $(m_i,d_j)$ when $j \neq 1$ the joint pmf is given by $p(m)p(d)p(y,z)$ we have,  
\[
Q(\mathcal{E}_{ij}) \leq 2^{-n(I(M,D;Y,Z)-4\epsilon)}.
\] 
\rev{Finally}, for any arbitrary $\epsilon >0$, the conditional probability terms for the errors events above in \eqref{eq:Perror} are $Q(\cdot) \rightarrow 0$ as $n \rightarrow \infty$. Hence, for 
\[
 k < nI(M,D;Y,Z),
\]
since $P_{e}^{n} \rightarrow 0$ as $n \rightarrow \infty$, we can conclude that, on average, the error probability to identify the non-correct message by joint hash check and error correction scheme proposed is negligible. Hence, we achieve the bound on the rate provided in Theorem~\ref{tho:deterministic}. Namely, for \rev{$M \in \mathbb{F}_{2}^{k}$ and $D \in \mathbb{F}_{2}^{n-k}$}, the NL-CHF scheme can asymptotically transmit with an arbitrarily low probability of error at a rate of $\frac{k}{n}<C$.

\section{\rev{Conclusions}}\label{sec:conc}
In this work, we identify a family of non-linear
cryptographic hash functions with a uniform output that have error correction capabilities. In particular, we show that classical SHA algorithms, which are widely used, have error correction capabilities that allow for reliable transmission of requested plaintext over a noisy channel. The proposed approach opens up a new area of exploration for many applications in which hash functions are used. 
\bibliographystyle{IEEEtran}
\bibliography{refs.bib}

\end{document}

\ifCLASSINFOpdf
\else
\fi
